\newtheorem{thm}{Theorem}[section]
\newtheorem{rem}[thm]{Remark}
\newtheorem{exm}[thm]{Example}
\begin{document}

\title{Utility Indifference Pricing with High Risk Aversion and Small Linear Price Impact}
\author{Yan Dolinsky\footnotemark[1] and Shir Moshe\footnotemark[2]}
\date{\today}
\markboth{Y.Dolinsky and S.Moshe}{High Risk Aversion and Small Linear Price Impact}
\maketitle
\renewcommand{\thefootnote}{\fnsymbol{footnote}}
\footnotetext[1]
{Department of Statistics, Hebrew University of Jerusalem.
             \email{yan.dolinsky@mail.huji.ac.il}.}
\footnotetext[2]{Department of Statistics, Hebrew University of Jerusalem.
\email{shir.tapiro@mail.huji.ac.il}.}
\footnotetext[3] {Both authors supported by the GIF Grant 1489-304.6/2019 and the ISF grant 230/21.}

\renewcommand{\thefootnote}{\arabic{footnote}}

\pagenumbering{arabic}

\begin{abstract}
We consider the Bachelier model with
linear price impact.
Exponential utility indifference prices are studied for vanilla European options and
we compute their non-trivial scaling limit for a
vanishing price impact which is inversely proportional
to the risk aversion. Moreover, we find explicitly
a family of portfolios which are asymptotically optimal.
\end{abstract}
\begin{keywords}
Utility Indifference Pricing, Linear Price Impact, Asymptotic
Analysis
\end{keywords}

\section{Introduction}
In financial markets, trading moves prices against the trader:
buying faster increases execution prices, and selling faster decreases them.
This aspect of liquidity, known as market depth
\cite{B:86} or price-impact has recently received increasing attention
(see, for instance,
\cite{AFS:2010,BCE:2021,BLZ:2016,BT:2019,BV:2019,CHM:20,FSU:2019,GR:15,GW:2020,MMS:17,N:20,SZ:2018} and the references therein).

In \cite{GR:15} the authors showed that for a reasonable market model,
in the presence of price impact, super--replication
is prohibitively costly. Namely, in the presence of
price impact, even in market models such as the
Bachelier model or the Black–Scholes model
(which are complete in the frictionless setup)
there is no practical way to construct a hedging strategy
which eliminates all risk from a financial position.
This brings us to utility indifference pricing.

In this paper we study
utility indifference pricing for vanilla European options in
the multi--dimensional Bachelier model with
linear price impact.
Our main result is
computing the asymptotic behavior of the exponential
utility indifference prices
where the risk
aversion goes to infinity at a rate which is inversely proportional to
the linear price impact which goes to zero.
In addition we provide a family of
asymptotically optimal hedging strategies.

This type of scaling limits goes back to the seminal work of
Barles and Soner \cite{BS:98} which
determines the scaling limit of utility indifference prices
of vanilla options for small proportional
transaction costs and high risk aversion.
The present
note provides an analogous analysis for the case of linear price impact which results in quadratic transaction costs, albeit
using probabilistic techniques rather than taking a PDE
approach as pursued in \cite{BS:98}.

We divide the proof of our main result, namely Theorem \ref{thm.1} into two main
steps: the proof of the lower bound and the proof of the upper bound.

The proof of the lower bound goes through a dual representation of the certainty equivalent.
In the dual problem, there is only one player: a maximizing (adverse) player
that controls the probability measure.
A key ingredient in the proof of the lower bound is a construction
of a family of probability measures which attain (in the asymptotic sense) the desired limit. This is done in Proposition \ref{prop1}.

The proof of the upper bound does not use duality and is based on a direct argument (Proposition \ref{prop2}).
More precisely,
we construct a family
of trading strategies for which the expected utility converges to the scaling limit.
Roughly speaking, these strategies
are given by a reversion towards the $\Delta$--hedging
strategy which corresponds to a modified European option and a modified stock price.

The rest of the paper is organized as follows. In the next section we introduce
the setup and formulate the main results. In Section \ref{sec:3} we discuss the dual representation and prove the lower bound.
In Section \ref{sec:4} we prove the upper bound.

\section{Preliminaries and Main Results}\label{sec:2}
Let $T<\infty$ be the time horizon and let
$W=\left(W^1_t,...,W^d_t\right)_{t \in [0,T]}$
be a standard $d$-dimensional Brownian motion defined on the
filtered probability space $(\Omega, \mathcal{F},(\mathcal F_t)_{t\in [0,T]},\mathbb P)$
where $(\mathcal F_t)_{t\in [0,T]}$ is the
(augmented) filtration generated by $W$.
We consider a simple financial market
with a riskless savings account bearing zero interest (for simplicity) and with $d$-risky
asset $S=\left(S^1_t,...,S^d_t\right)_{t \in [0,T]}$ with Bachelier price dynamics
\begin{equation}\label{2.bac}
S^i_t=s^i_0+\mu^i t+\sum_{j=1}^d \sigma^{ij} W^j_t
\end{equation}
where $s_0=(s^1_0,...,s^d_0) \in \mathbb R^d$ is the initial position of the risky assets,
$\mu=(\mu^1,...\mu^d)\in\mathbb R^d$ is a constant vector (drift) and
$\sigma=\{\sigma^{ij}\}_{1\leq i,j \leq d}\in M_d(\mathbb R)$
is a constant nonsingular matrix (volatility). Without loss of generality we assume
that the constant nonsingular matrix $\sigma$ is a positive definite matrix.

Following
\cite{AlmgrenChriss:01}, we model the investor’s
market impact, in a temporary linear
form and, thus, when at time $t$ the investor turns over her position at the $i$--asset $\Phi^i_t$ at
the rate $\phi^i_t:=\dot{\Phi}^i_t$ the execution price is $S_t+\frac{\Lambda}{2}
\phi^i_t$ for some constant $\Lambda>0$.
As a result, the profits and
losses from a trading strategy $\phi=\left(\phi^1_t,...,\phi^d_t\right)_{t\in [0,T]}$ with the initial position
$\Phi_0=\left(\Phi^1_0,...,\Phi^d_0\right)$
are given by
\begin{equation}\label{eq:pnl}
V^{\Phi_0,\phi}_t:=\int_{0}^t \langle \Phi_u , dS_u\rangle-\frac{\Lambda}{2}\int_{0}^t ||\phi_v||^2 dv, \qquad t \in [0,T]
\end{equation}
where, for convenience, we assume that the investor marks to market her position
$\Phi_t=\Phi_0+\int_{0}^t \phi_v dv$
in the risky asset that she has acquired by time $t$.
As usual,
$\langle \cdot,\cdot\rangle$ and $||\cdot||$,
denotes the
standard scalar product and the Euclidean norm, respectively.
In our setup, the natural class of admissible strategies
is
$$
\mathcal A:=\left\{\phi=\left(\phi^1_t,...,\phi^d_t\right)_{t\in [0,T]}: \ \phi \text{ is } \
  \mathcal F\text{-adapted with } \int_{0}^T ||\phi_t||^2 dt<\infty
  \ \text{ a.s.}\right \}.
$$
\begin{rem}
Let us notice that by scaling the risky assets, there is no loss of generality in assuming that the constant
$\Lambda>0$ which represents the linear price impact is the same for all risky assets.
\end{rem}

Next, consider a vanilla European option with the payoff $f\left(S_T\right)$ where
$f:\mathbb R^d\rightarrow \mathbb R$ is a Lipschitz continuous function.
The investor will assess the quality of a hedge by the resulting expected utility.
 Assuming exponential utility with constant absolute risk aversion $\alpha>0$, the utility
indifference price and the certainty equivalent price of one unit of the claim $f\left(S_T\right)$ (see, e.g., \cite{R:08} for details
on indifference prices) do not depend on the investor's initial wealth
and, respectively, take the well-known forms
\begin{equation}\label{2.2}
\pi(\Lambda,\alpha,\Phi_0,f):=
\frac{1}{\alpha}\log\left(\frac{\inf_{\phi\in\mathcal A}\mathbb E_{\mathbb P}\left[\exp\left(\alpha\left(f(S_T)-V^{\Phi_0,\phi}_T\right)\right)\right]}
{\inf_{\phi\in\mathcal A}\mathbb E_{\mathbb P}\left[\exp\left(-\alpha V^{\Phi_0,\phi}_T\right)\right]}\right)
\end{equation}
and
$$c(\Lambda,\alpha,\Phi_0,f):=
\frac{1}{\alpha}\log\left(\inf_{\phi\in\mathcal A}\mathbb E_{\mathbb P}\left[\exp\left(\alpha\left(f(S_T)-V^{\Phi_0,\phi}_T\right)\right)\right]\right).$$

If the risk aversion $\alpha>0$ is fixed, then by applying standard density arguments we obtain that
for $\Lambda\downarrow 0$, the above indifference price converges to the unique price of the continuous
time complete (frictionless) market given by (\ref{2.bac}). A more interesting limit emerges,
however, if we re-scale the investor’s risk-aversion in the form $\alpha:=A/\Lambda$. Before we
formulate the limit theorem we need some preparations.

For a given $A>0$ introduce the functions
\begin{equation}\label{def1}
g^A(x):=\sup_{y\in\mathbb R^d}\left[f(x+y)-\frac{\langle y \sigma^{-1} ,y \rangle }{2\sqrt A }\right],\quad x=(x^1,...,x^d)\in\mathbb R^d
\end{equation}
and
\begin{equation}\label{def2}
u^A(t,x):=\mathbb E_{\mathbb P}\left[g^A(x+W_{T-t}\sigma )\right], \quad (t,x)\in [0,T]\times\mathbb R^d
\end{equation}
where the vectors $x,y,W$ are considered as row vectors and
for any row vector $z\in\mathbb R^d$, $z\sigma^{-1},z\sigma\in\mathbb R^d$ are the standard matrix products.
The term $u^A(t,S_t)$
represents the price at time $t$ of a European option with the payoff $g^A(S_T)$
in the complete market given by (\ref{2.bac}).
It is well known that $u\in C^{1,2}([0,T)\times\mathbb R^d)$ solves the PDE
\begin{equation}\label{PDE}
\frac{\partial u^A}{\partial t}+\frac{tr\left(\sigma^2 D^2_x u^A\right)}{2}=0 \ \ \ \ \mbox{in} \ \ [0,T)\times\mathbb R
\end{equation}
where
$tr (\cdot)$ is the trace of the square matrix $\cdot$ and
$D^2_x u^A$ is the Hessian matrix with respect to $x=(x^1,...,x^d)$ which is given by
$[D^2_x u^A]_{ij}:=\frac{\partial ^2 u^A}{\partial x^i \partial x^j}$, $1\leq i,j\leq d$.

For a given $A,\Lambda>0$ consider the $d$--dimensional (random) ODE ($\Phi$ is a row vector)
\begin{equation}\label{ODE}
\phi_t:=\dot{\Phi}_t=\frac{\sqrt A}{\Lambda}  \left(D_x u^A\left(t,S_t-\sqrt A \Phi_t\sigma\right)-\Phi_t\right)\sigma, \ \ \ \  t\in [0,T)
\end{equation}
where
$D_x u^A:=\left(\frac{\partial u^A}{\partial x^1},...,\frac{\partial u^A}{\partial x^d}\right)\in \mathbb R^d$ is the gradient with respect to $x$.
From the linear growth of $f$ it follows that for any $\epsilon>0$ the function
$D_x u^A,D^2_x u^A$ are uniformly bounded in the domain $[0,T-\epsilon]\times\mathbb R^d$. In particular
$D_x u^A$ is Lipschitz continuous with respect to $x$ in the domain $[0,T-\epsilon]\times\mathbb R^d$.
Hence, from the standard theory of
ODE
(see
Chapter II, Section 6 in \cite{W:98}) we obtain that
for a given initial value $\Phi_0$
there exists a unique solution to (\ref{ODE})
which we denote by $(\Phi^{A,\Lambda}_t)_{0\leq t<T}$.
Next, the Lipschitz continuity of $f$ implies that $g^A$ is a Lipschitz continuous function (with the same constant as $f$),
and so
$D_x u^A$ is uniformly
bounded in $[0,T)\times\mathbb R^d$. This together with the
mean reverting structure of the ODE (\ref{ODE}) yields that
$\lim_{t\rightarrow T-}\Phi^{A,\Lambda}_t$ exists and finite a.s.
Thus, we can
extend $\Phi^{A,\Lambda}$ to the interval $[0,T]$ by
$\Phi^{A,\Lambda}_t:=\lim_{t\rightarrow T-}\Phi^{A,\Lambda}_t$ and we
define $\phi^{A,\Lambda}\in\mathcal A$ by
$\phi^{A,\Lambda}_t: =\dot{\Phi}^{A,\Lambda}_t$
for $t<T$ and $\phi^{A,\Lambda}_T=0$.
Obviously,
$$\Phi^{A,\Lambda}_t:=\Phi_0+\int_{0}^t\phi^{A,\Lambda}_v dv, \ \ t\in [0,T]. $$
\begin{rem}
In words, the ODE (\ref{ODE}) says that the solution $\Phi^{A,\Lambda}$ is tracking
the $\Delta$--hedging strategy which corresponds to the modified payoff $g^A$ and the
shifted stock price $S_t-\sqrt A \Phi^{A,\Lambda}_t\sigma$. We notice that the shift depends on the solution
 $\Phi^{A,\Lambda}$.
\end{rem}

We arrive at the main result of the paper which
provides an explicit
formula for the asymptotic behavior of the certainty equivalent and
an optimal family (it should not be unique) of hedging strategies in the asymptotic sense.
 \begin{theorem}\label{thm.1}
For vanishing linear price impact $\Lambda \downarrow 0$ and re-scaled high risk-aversion
$A/\Lambda$ with $A>0$ fixed, the certainty equivalent of $f(S_T)$ has the scaling limit
\begin{equation}\label{2.3}
\lim_{\Lambda\downarrow 0} c(\Lambda,A/\Lambda,\Phi_0,f)=
u^A\left(0,s_0- \sqrt A\Phi_0\sigma\right)+\frac{\sqrt A\langle \Phi_0\sigma,\Phi_0\rangle }{2}.
\end{equation}
  Moreover, we have,
\begin{eqnarray*}
 &\lim_{\Lambda\downarrow 0}\frac{\Lambda}{A}
 \log\left(\mathbb E_{\mathbb P}\left[\exp\left(\frac{A}{\Lambda}\left(f(S_T)-V^{\Phi_0,\phi^{A,\Lambda}}_T\right)\right)\right]
\right)
\\
&=u^A\left(0,s_0- \sqrt A\Phi_0\sigma\right)+\frac{\sqrt A\langle \Phi_0\sigma,\Phi_0\rangle }{2}.
\end{eqnarray*}
\end{theorem}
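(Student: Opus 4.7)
The plan follows the dichotomy the authors suggest: establish matching asymptotic lower and upper bounds on the certainty equivalent $c(\Lambda,A/\Lambda,\Phi_0,f)$, and verify that the explicit candidate strategy $\phi^{A,\Lambda}$ attains the upper bound, which takes care of both \eqref{2.3} and the second assertion of the theorem.

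For the upper bound (Proposition~\ref{prop2}), I would apply Itô's formula to $u^A(t,\tilde S_t)$ with $\tilde S_t:=S_t-\sqrt{A}\Phi^{A,\Lambda}_t\sigma$. Since $\tilde S$ and $S$ share the same quadratic variation, the PDE~\eqref{PDE} annihilates the $dt$ part of the Itô expansion and only $\langle D_x u^A,d\tilde S\rangle$ survives; the terminal value is $g^A(\tilde S_T)$. Plugging $y=\sqrt{A}\Phi^{A,\Lambda}_T\sigma$ in~\eqref{def1} gives $g^A(\tilde S_T)\geq f(S_T)-\tfrac{\sqrt{A}}{2}\langle\Phi^{A,\Lambda}_T\sigma,\Phi^{A,\Lambda}_T\rangle$, and combining with~\eqref{eq:pnl}, the identity $\tfrac{d}{dt}\langle\Phi_t\sigma,\Phi_t\rangle=2\langle\phi_t\sigma,\Phi_t\rangle$, and the ODE identity $D_x u^A(t,\tilde S_t)-\Phi^{A,\Lambda}_t=(\Lambda/\sqrt{A})\phi^{A,\Lambda}_t\sigma^{-1}$, the drift cross-terms cancel exactly and one obtains
\begin{equation*}
\tfrac{A}{\Lambda}\bigl(f(S_T)-V^{\Phi_0,\phi^{A,\Lambda}}_T\bigr)\leq \tfrac{AC}{\Lambda}+\sqrt{A}\!\int_0^T\!\langle\phi^{A,\Lambda}_t\sigma^{-1},\mu\rangle dt+\log\mathcal{E}\bigl(\sqrt{A}\!\!\int\!\!\langle\phi^{A,\Lambda},dW\rangle\bigr)_T,
\end{equation*}
with $C$ the right-hand side of~\eqref{2.3}. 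For each fixed $\Lambda>0$ the trajectories of $\phi^{A,\Lambda}$ are bounded, so the Doléans exponential is a true $\mathbb{P}$-martingale; a Girsanov change to $\mathbb{Q}_\Lambda$ reduces the problem to $\mathbb{E}^{\mathbb{Q}_\Lambda}[\exp(\sqrt{A}\int_0^T\langle\phi^{A,\Lambda}_t\sigma^{-1},\mu\rangle dt)]$, and under $\mathbb{Q}_\Lambda$ the process $\tilde S$ inherits the original $\mathbb{P}$-dynamics of $S$. A fast-mean-reversion estimate for~\eqref{ODE} then forces this expectation to be sub-exponential in $1/\Lambda$, and dividing $\log\mathbb{E}$ by $A/\Lambda$ yields $\limsup\leq C$.

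For the lower bound (Proposition~\ref{prop1}), I would invoke the exponential--entropy duality $\frac{1}{\alpha}\log\mathbb{E}_{\mathbb{P}}[e^{\alpha X}]=\sup_{\mathbb{Q}\ll\mathbb{P}}(\mathbb{E}^{\mathbb{Q}}[X]-\frac{1}{\alpha}H(\mathbb{Q}\mid\mathbb{P}))$ inside the infimum defining $c$, and then interchange inf and sup to get
\begin{equation*}
c(\Lambda,A/\Lambda,\Phi_0,f)\geq \sup_{\mathbb{Q}}\Bigl(\mathbb{E}^{\mathbb{Q}}[f(S_T)]-\tfrac{\Lambda}{A}H(\mathbb{Q}\mid\mathbb{P})-\sup_{\phi\in\mathcal A}\mathbb{E}^{\mathbb{Q}}[V^{\Phi_0,\phi}_T]\Bigr).
\end{equation*}
Parameterising $\mathbb{Q}$ by Girsanov drifts $\eta$, the inner sup over $\phi$ is a per-path quadratic whose maximiser is $\phi_s^\ast=\Lambda^{-1}\int_s^T(\mu+\sigma\eta_t)dt$, turning the right-hand side into an explicit variational problem for $\eta$. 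It then remains to construct a family $\eta_\Lambda$ realising $C$ asymptotically: the natural choice---suggested by completing the square against~\eqref{def1}---makes $S$ look, under $\mathbb{Q}_\Lambda$, like a Bachelier model whose terminal value reproduces the maximiser in~\eqref{def1}, and a backward Itô computation on $u^A$ under $\mathbb{Q}_\Lambda$ recovers $C$ up to vanishing errors.

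The principal obstacle, in both directions, is the quantitative analysis of the mean-reverting ODE~\eqref{ODE}: for the upper bound one needs sharp time-integrated control on $\sqrt{A}\|\phi^{A,\Lambda}\|$ uniformly in $\Lambda\downarrow 0$, in spite of the $t\uparrow T$ singularity of $D^2_x u^A$; for the lower bound the same fast-reversion structure guides the construction of $\eta_\Lambda$ and is needed to show the inf--sup swap is asymptotically tight. Both analyses ultimately rely on the boundedness of $D_xu^A$ inherited from the Lipschitz property of $f$ and on the positive definiteness of $\sigma$.
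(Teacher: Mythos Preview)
Your architecture---upper bound via It\^o on $u^A(t,S_t-\sqrt A\,\Phi^{A,\Lambda}_t\sigma)$, lower bound via exponential--entropy duality---matches the paper. Two points deserve comment.

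\textbf{Upper bound.} Your computation is exactly the paper's (it packages $u^A+\tfrac{\sqrt A}{2}\langle\Phi\sigma,\Phi\rangle-V$ into the exponent of a process $M^\Lambda$ and shows that, after a $\mu$-correction, one has a nonnegative local martingale, hence a supermartingale). But you overcomplicate the endgame. The residual drift is
\[
\sqrt A\int_0^T\langle\phi^{A,\Lambda}_t\sigma^{-1},\mu\rangle\,dt=\sqrt A\,\langle\Phi^{A,\Lambda}_T-\Phi_0,\mu\sigma^{-1}\rangle,
\]
and the variation-of-constants formula for \eqref{ODE} together with the uniform bound on $D_xu^A$ gives $\sup_{t,\Lambda}\|\Phi^{A,\Lambda}_t\|\leq C<\infty$. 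So this term is \emph{deterministically} bounded independently of $\Lambda$; multiplying by $\Lambda/A$ kills it. No Girsanov change and no ``fast-mean-reversion estimate'' are needed: the supermartingale inequality $\mathbb E_{\mathbb P}[M^\Lambda_T]\leq e^{O(1)}M^\Lambda_0$ already finishes.

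\textbf{Lower bound.} Here there are two genuine gaps. First, the inner supremum over \emph{adapted} $\phi$ is not attained by the anticipating expression $\phi^\ast_s=\Lambda^{-1}\int_s^T(\mu+\sigma\eta_t)\,dt$ you write; the correct optimiser is $\phi^\ast_t=\Lambda^{-1}\bigl(\mathbb E_{\mathbb Q}[S_T\mid\mathcal F_t]-S_t\bigr)$. Using the pathwise version subtracts too much and produces a strictly weaker lower bound---so the conditional expectation has to be carried through (this is the content of Lemma~\ref{lem1}). Second, and more substantially, the construction of the near-optimal $\mathbb Q_\Lambda$ is where the real work lies, and ``make $S$ look like a Bachelier model whose terminal value reproduces the maximiser in \eqref{def1}'' followed by ``a backward It\^o computation on $u^A$'' is not yet a proof. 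The paper fixes a target terminal shift $Y=y+\int_0^T H(s,W_s)\cdot dW_s$ and builds the Girsanov drift from the explicit kernel
\[
K^\Lambda_{t,s}=\cosh\!\bigl(\sqrt A(T-t)\sigma/\Lambda\bigr)\bigl(\sinh(\sqrt A(T-s)\sigma/\Lambda)\bigr)^{-1},
\]
chosen so that (i) the drift integrates to $Y\sigma^{-1}$, and (ii) the entropy integral $\tfrac{1}{2\Lambda}\int(K^\Lambda_{t,s})^2dt$ and the conditional-deviation integral $\tfrac{1}{2\Lambda}\int(L^\Lambda_{t,s})^2dt$ (with $L^\Lambda$ the corresponding $\sinh/\sinh$ kernel) \emph{each} converge to $\sigma^{-1}/(4\sqrt A)$, summing to the quadratic penalty in \eqref{def1}. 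Identifying this $\cosh/\sinh$ structure---the solution of the underlying linear-quadratic problem---and verifying the two matched limits is the substance of Proposition~\ref{prop1}; it does not follow from an It\^o expansion of $u^A$.
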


From Theorem \ref{thm.1} we obtain immediately the following corollary which says that the
asymptotic value of the utility indifference prices is equal to the price of the vanilla European option with the payoff $g^A(S_T)$ and the shifted initial stock price
$s_0-\sqrt A\Phi_0\sigma$.
\begin{corollary}\label{cor.1}
For vanishing linear price impact $\Lambda \downarrow 0$ and re-scaled high risk-aversion
$A/\Lambda$ with $A>0$ fixed, the utility indifference price of $f(S_T)$ has the scaling limit
$$
 \lim_{\Lambda\downarrow 0}\pi(\Lambda,A/\Lambda,\Phi_0,f)=u^A\left(0,s_0- \sqrt A\Phi_0\sigma\right).$$
\end{corollary}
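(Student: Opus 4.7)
The plan is to reduce the corollary to Theorem \ref{thm.1} by writing the indifference price as a difference of two certainty equivalents and using the special structure of $g^A$ in the case $f\equiv 0$.

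First, I would rewrite the utility indifference price (\ref{2.2}) as
\begin{equation*}
\pi(\Lambda,\alpha,\Phi_0,f)=c(\Lambda,\alpha,\Phi_0,f)-c(\Lambda,\alpha,\Phi_0,0),
\end{equation*}
which follows directly from the definitions of $\pi$ and $c$: the denominator inside the logarithm in (\ref{2.2}) is precisely $\exp(\alpha\, c(\Lambda,\alpha,\Phi_0,0))$. Thus the whole claim reduces to computing the limit of $c(\Lambda,A/\Lambda,\Phi_0,0)$ and subtracting it from the limit already established in Theorem \ref{thm.1}.

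Next, I would specialize Theorem \ref{thm.1} to the payoff $f\equiv 0$. In this case the definition (\ref{def1}) becomes
\begin{equation*}
g^A(x)=\sup_{y\in\mathbb{R}^d}\left[-\frac{\langle y\sigma^{-1},y\rangle}{2\sqrt{A}}\right]=0,
\end{equation*}
where the supremum is attained at $y=0$ (recall $\sigma$ is positive definite, so $\langle y\sigma^{-1},y\rangle\geq 0$). Consequently the Bachelier-type price (\ref{def2}) becomes $u^A\equiv 0$ on $[0,T]\times\mathbb{R}^d$, and Theorem \ref{thm.1} applied with zero payoff yields
\begin{equation*}
\lim_{\Lambda\downarrow 0}c(\Lambda,A/\Lambda,\Phi_0,0)=\frac{\sqrt{A}\langle\Phi_0\sigma,\Phi_0\rangle}{2}.
\end{equation*}

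Finally, subtracting this from the limit in (\ref{2.3}) produces
\begin{equation*}
\lim_{\Lambda\downarrow 0}\pi(\Lambda,A/\Lambda,\Phi_0,f)=u^A(0,s_0-\sqrt{A}\Phi_0\sigma)+\frac{\sqrt{A}\langle\Phi_0\sigma,\Phi_0\rangle}{2}-\frac{\sqrt{A}\langle\Phi_0\sigma,\Phi_0\rangle}{2}=u^A(0,s_0-\sqrt{A}\Phi_0\sigma),
\end{equation*}
which is the desired identity. There is no real obstacle here: all the work is already done in Theorem \ref{thm.1}, and the proof of the corollary amounts to observing that the zero-payoff specialization makes $g^A$ and hence $u^A$ vanish, so that the quadratic initial-position term cancels in the difference.
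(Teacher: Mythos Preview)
Your argument is correct and is exactly the paper's own proof, just spelled out in more detail: the paper simply says ``Apply (\ref{2.3}) and take $f\equiv 0$ for the denominator of (\ref{2.2}),'' which is precisely your decomposition $\pi=c(\Lambda,\alpha,\Phi_0,f)-c(\Lambda,\alpha,\Phi_0,0)$ together with the observation that $g^A\equiv 0$ (hence $u^A\equiv 0$) when $f\equiv 0$.
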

\begin{proof}
Apply (\ref{2.3}) and take $f\equiv 0$ for the
denominator of (\ref{2.2}).
\end{proof}

The following remark provides
a possible application of Corollary \ref{cor.1}.
\begin{rem}
Let
$\hat \pi(\Lambda,\alpha,\Phi_0,q):=\frac{1}{q} \pi(\Lambda,\alpha,\Phi_0,q f)$
be the per-unit indifference price for selling $q$ units of the option.
From (\ref{eq:pnl}) (apply the bijection $\phi\rightarrow \frac{\phi}{\Lambda}$)
it follows that
$\pi\left(\Lambda,\frac{A}{\Lambda},\Phi_0,f\right)=\hat\pi\left(\Lambda^2,A,\frac{\Phi_0}{\Lambda},\frac{1}{\Lambda}\right)$.
Hence, Corollary \ref{cor.1} can be viewed as a limit theorem for per-unit utility indifference prices for the case of
vanishing linear price impact and large position sizes. An interesting question is whether the theory
which was developed in \cite{R:17} can be
applied for obtaining
the asymptotic behaviour of
optimal position sizes (for the exact definition see \cite{R:17}). We leave this question for future research.
\end{rem}

We end this section with the following example.
\begin{exm}
Consider a European option with the payoff
$$f(x)=\left(\left\langle a,x\right\rangle+b\right)^{+}, \ \ x\in\mathbb R^d$$
for some constant $a\in \mathbb R^d$ and $b\in\mathbb R$.
Then we have
$$
g^A(x):=\sup_{y\in\mathbb R^d}\left[\left(\left\langle a,x+y\right\rangle+b  \right)^{+}-\frac{\left\langle y \sigma^{-1} ,y \right\rangle }{2\sqrt A }\right],\quad x\in\mathbb R^d.
$$

Clearly, the quadratic pattern
$y\rightarrow \langle a,y \rangle  -\frac{\langle y \sigma^{-1} ,y \rangle }{2\sqrt A }$
attains its maximum at $y^{*}:=\sqrt{A}a\sigma$. This together with the obvious inequality $g^A\geq f\geq 0$
yields that
$$g^A(x)=\left(\left\langle a,x+y^{*}\right\rangle+b- \frac{\left\langle y^{*} \sigma^{-1} ,y^{*} \right\rangle}{2\sqrt A }\right)^{+}=
\left(\left\langle a,x\right\rangle+b+\frac{\sqrt A \left\langle a \sigma ,a \right\rangle}{2}\right)^{+}.$$
\end{exm}
\begin{figure}
\centering
\includegraphics[width=0.9\textwidth]{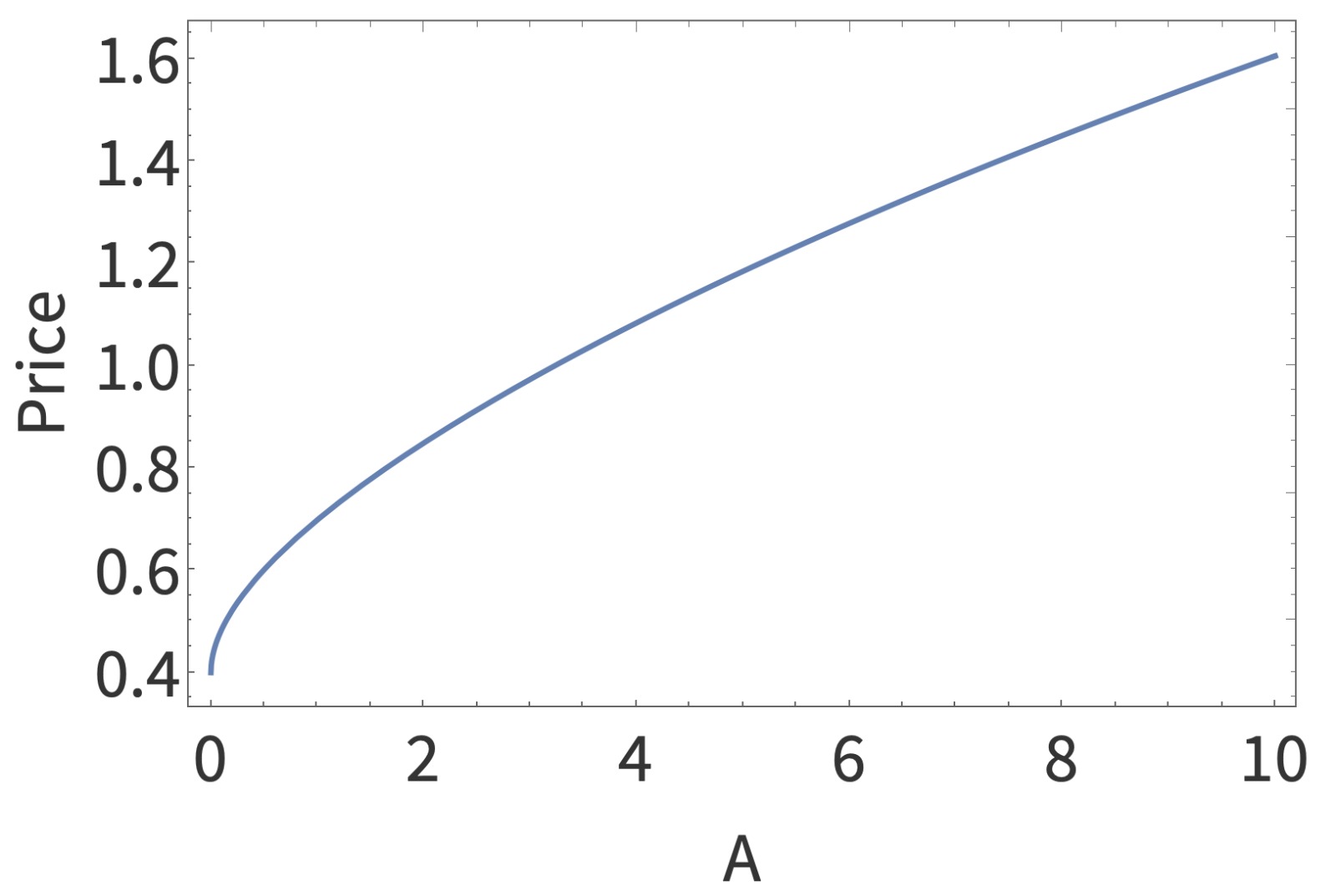}
\caption{We consider a one dimensional Bachelier model with $\sigma=1$. We assume that the initial number of stocks is $\Phi_0=0$.
The maturity date is $T=1$ and the initial stock price is $s_0=8$. Let $f(S_T)=(S_T-8)^{+}$, i.e. we take at the money call option.
We plot the limiting utility indifference price as a function of the parameter $A$. Namely we plot the function $A\rightarrow u^A(0,s_0)$.}
\end{figure}

\section{The Dual Problem and the Lower Bound}\label{sec:3}
In this section we establish the
inequality $\geq$ in (\ref{2.3}).

We start with the following lemma.
\begin{lemma}\label{lem1}
Denoting by $\mathcal Q$ the set of all probability measures $\mathbb Q\sim\mathbb P$ with finite
entropy
$\mathbb E_{\mathbb Q}\left[\log\left(\frac{d\mathbb Q}{d\mathbb P}\right) \right]<\infty$
relative to $\mathbb P$, we have
\begin{eqnarray}\label{3.2}
&c(\Lambda,\alpha,\Phi_0,f)\nonumber\\
&\geq\sup_{\mathbb Q\in\mathcal Q} \mathbb{E}_{\mathbb Q}\bigg[f\left(S_T\right)-\langle \Phi_0,S_T-s_0\rangle
-\frac{1}{\alpha}\log\left(\frac{d\mathbb Q}{d\mathbb P}\right)\\
&-\frac{1}{2\Lambda}\int_{0}^T\left|\left|S_t-
\mathbb{E}_{\mathbb Q}\left(S_T|\mathcal F_t\right)\right|\right|^2 dt\bigg].\nonumber
\end{eqnarray}
\end{lemma}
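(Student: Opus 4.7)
The plan is to combine two classical tools: the Donsker--Varadhan (Legendre) duality for exponential moments and a pathwise integration by parts that exploits the absolutely continuous paths of $\Phi$. The duality gives, for any $\mathbb Q\in\mathcal Q$ and any $\phi\in\mathcal A$,
\[
\frac{1}{\alpha}\log \mathbb E_{\mathbb P}\!\left[\exp\!\left(\alpha\bigl(f(S_T)-V^{\Phi_0,\phi}_T\bigr)\right)\right]
\;\geq\;
\mathbb E_{\mathbb Q}\!\left[f(S_T)-V^{\Phi_0,\phi}_T\right]-\frac{1}{\alpha}\mathbb E_{\mathbb Q}\!\left[\log\tfrac{d\mathbb Q}{d\mathbb P}\right].
\]
Taking the infimum over $\phi$ on both sides (which commutes with the monotone $\log$) reduces the problem to an upper bound on $\sup_{\phi\in\mathcal A}\mathbb E_{\mathbb Q}[V^{\Phi_0,\phi}_T]$ of exactly the shape required by (\ref{3.2}).

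For that upper bound I would rewrite the P\&L so that $\phi$ appears linearly up to the quadratic penalty. Since $\Phi$ is absolutely continuous, its covariation with $S$ vanishes and the ordinary product rule yields $d\langle\Phi_u,S_u\rangle=\langle\phi_u,S_u\rangle\,du+\langle\Phi_u,dS_u\rangle$. Combined with $\Phi_T=\Phi_0+\int_0^T\phi_u\,du$, this gives the pathwise identity
\[
\int_0^T\langle \Phi_u,dS_u\rangle=\langle \Phi_0,S_T-s_0\rangle+\int_0^T\langle \phi_u,S_T-S_u\rangle\,du.
\]
Taking $\mathbb Q$-expectation and applying Fubini together with the tower property (using that $\phi_u$ is $\mathcal F_u$-measurable) replaces $S_T$ inside the inner product by $M_u:=\mathbb E_{\mathbb Q}[S_T\mid\mathcal F_u]$, producing
\[
\mathbb E_{\mathbb Q}[V^{\Phi_0,\phi}_T]=\mathbb E_{\mathbb Q}\!\left[\langle \Phi_0,S_T-s_0\rangle+\int_0^T\!\!\Big(\langle\phi_u,M_u-S_u\rangle-\tfrac{\Lambda}{2}\|\phi_u\|^2\Big)du\right].
\]
The integrand is a concave quadratic in $\phi_u$, pointwise bounded by $\tfrac{1}{2\Lambda}\|M_u-S_u\|^2$, which immediately yields the desired estimate on $\sup_{\phi}\mathbb E_{\mathbb Q}[V^{\Phi_0,\phi}_T]$; substituting back into the duality inequality and taking the supremum over $\mathbb Q\in\mathcal Q$ delivers (\ref{3.2}).

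The main technical point is ensuring that all expectations above are well-defined so that Fubini and the entropy inequality can be applied legitimately. I would handle this by observing that whenever $\mathbb E_{\mathbb Q}[V^{\Phi_0,\phi}_T]$ fails to be defined in $(-\infty,\infty]$, the right-hand side of the duality inequality is already $-\infty$ and there is nothing to prove; otherwise the linear growth of $f$, the finite entropy of $\mathbb Q$ (which provides $\mathbb Q$-integrability of $S_T$ via the Gaussian moments of $S_T$ under $\mathbb P$), and the admissibility condition $\int_0^T\|\phi_u\|^2\,du<\infty$ a.s.\ suffice to justify every interchange.
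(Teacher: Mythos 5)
Your proposal follows the same route as the paper: Legendre/Fenchel (equivalently Donsker--Varadhan) duality, the pathwise integration-by-parts identity for $V^{\Phi_0,\phi}_T$, the tower property to replace $S_T$ by $M_u=\mathbb E_{\mathbb Q}[S_T\mid\mathcal F_u]$, and pointwise maximization of the concave quadratic in $\phi_u$. The paper works with the raw Fenchel inequality $pq\le e^p+q(\log q-1)$ and optimizes over an auxiliary scalar $\gamma$ at the end, which is exactly your Donsker--Varadhan inequality in disguise, so that part is only a presentational difference.

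There is, however, a genuine gap in the integrability discussion that your last paragraph does not close. Your reduction asks for an upper bound on $\sup_{\phi\in\mathcal A}\mathbb E_{\mathbb Q}[V^{\Phi_0,\phi}_T]$, and the key step passes from $\mathbb E_{\mathbb Q}\big[\int_0^T\langle\phi_u,S_T-S_u\rangle\,du\big]$ to $\mathbb E_{\mathbb Q}\big[\int_0^T\langle\phi_u,M_u-S_u\rangle\,du\big]$ via Fubini and the tower property. This interchange needs $\langle\phi_u,S_T-M_u\rangle$ to be in $L^1(dt\otimes\mathbb Q)$, which is guaranteed once $\mathbb E_{\mathbb Q}\big[\int_0^T\|\phi_u\|^2\,du\big]<\infty$, but \emph{not} by the almost-sure condition $\int_0^T\|\phi_u\|^2\,du<\infty$ that defines $\mathcal A$; the term you are discarding is (after a stochastic Fubini) the local martingale $\int_0^T\langle\Phi_v-\Phi_0,dM_v\rangle$, whose expectation need not vanish without a moment bound on $\phi$. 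The paper avoids this by \emph{not} taking the infimum over $\phi$ first: it fixes an arbitrary $\phi$, assumes w.l.o.g. that $\mathbb E_{\mathbb P}\big[e^{\alpha(f(S_T)-V^{\Phi_0,\phi}_T)}\big]<\infty$ (otherwise that $\phi$ cannot realize the infimum), and deduces from this exponential finiteness together with finite entropy and $\mathbb E_{\mathbb Q}\big[\sup_t\|S_t\|^2\big]<\infty$ that $\mathbb E_{\mathbb Q}\big[\int_0^T\|\phi_t\|^2\,dt\big]<\infty$ (equations (3.5)--(3.8) in the paper). With that $L^2(dt\otimes\mathbb Q)$ bound in hand, all the interchanges you invoke are legitimate. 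So your argument is salvageable, but you should restrict to $\phi$ with finite exponential moment and carry out the paper's intermediate estimate $\mathbb E_{\mathbb Q}\big[\int_0^T\|\phi_t\|^2\,dt\big]<\infty$ before applying Fubini and the tower property, rather than appealing to the a.s.\ admissibility condition alone.
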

\begin{proof}
The proof rests on the classical Legendre-Fenchel duality inequality
\begin{equation}\label{dual}
 pq\leq e^p+q(\log q-1), \ \ \ p\in\mathbb R, \ q>0.
\end{equation}

Let $\phi\in\mathcal A$ and $\mathbb Q\in\mathcal Q$. From the Girsanov theorem it follows that there exists a process
$\theta=(\theta^1_t,...,\theta^d_t)_{t\in [0,T]}$ such that
$W^{\mathbb Q}_t:=W_t-\int_{0}^t\theta_v dv$, $t\in [0,T]$ is a $\mathbb Q$--Brownian motion. Moreover,
from the equality
\begin{equation*}\label{density}
Z:=\frac{d\mathbb Q}{d\mathbb P} = \exp\left(\int_0^T\langle \theta_t,
    dW_t\rangle -\frac{1}{2}\int_0^T||\theta_t||^2dt\right)
\end{equation*}
and the fact that $\mathbb Q\in\mathcal Q$ we obtain that
$$
\mathbb{E}_{\mathbb Q}\left[\log Z\right]=\frac{1}{2}\mathbb{E}_{\mathbb Q}\left[\int_{0}^T||\theta_t||^2 dt\right]<\infty
$$
and so from (\ref{2.bac})
\begin{equation}\label{fin}
\mathbb E_{\mathbb Q}\left[\sup_{0\leq t\leq T}||S_t||^2\right]<\infty.
\end{equation}
Without loss of generality we assume that
$\mathbb E_{\mathbb P}\left[e^{\alpha \left(f(S_T)-V^{\Phi_0,\phi}_T\right)}\right]<\infty.$
Thus, from (\ref{dual}) we obtain
\begin{equation}\label{fin1}
\alpha \mathbb E_{\mathbb Q}\left[\left(f(S_T)-V^{\Phi_0,\phi}_T\right)^{+}\right]\leq \mathbb E_{\mathbb P}\left[e^{\alpha \left(f(S_T)-V^{\Phi_0,\phi}_T\right)}\right]+
\mathbb{E}_{\mathbb Q}\left[\log Z\right]<\infty.
\end{equation}
Next, from (\ref{eq:pnl}) and the integration by parts formula it follows that
\begin{equation}\label{port}
V^{\Phi_0,\phi}_T=\langle \Phi_0,S_T-s_0\rangle+
\int_{0}^T\left(\langle \phi_t,S_T-S_t \rangle  -\frac{\Lambda}{2}\ ||\phi_t||^2 \right)dt.
\end{equation}
Hence, from the simple inequality
$$\langle \phi_t,S_T-S_t \rangle -\frac{\Lambda}{4}||\phi_t||^2\leq \frac{1}{\Lambda}||S_T-S_t||^2$$ we obtain
\begin{eqnarray*}
&f(S_T)-V^{\Phi_0,\phi}_T\geq -V^{\Phi_0,\phi}_T\\
&\geq -\langle \Phi_0,S_T-s_0\rangle+\int_{0}^T \left(\frac{\Lambda}{4}||\phi_t||^2-\frac{1}{\Lambda}||S_T-S_t||^2\right)dt.
\end{eqnarray*}
This together with (\ref{fin})--(\ref{fin1}) gives that
\begin{equation}\label{fin3}
\mathbb E_{\mathbb Q}\left[\int_{0}^T||\phi_t||^2dt \right]<\infty.
\end{equation}
From (\ref{dual})--(\ref{fin}) and (\ref{port})--(\ref{fin3})
we obtain
that for any $\gamma>0$
\begin{eqnarray}\label{3.gam}
&\mathbb E_{\mathbb P}\left[e^{\alpha \left(f(S_T)-V^{\Phi_0,\phi}_T\right)}\right]\nonumber\\
&\geq \mathbb E_{\mathbb P}\left[\alpha \gamma Z\left( f(S_T)-\langle \Phi_0,S_T-s_0\rangle -
\int_{0}^T\left(\langle \phi_t,S_T-S_t \rangle  -\frac{\Lambda}{2}\ ||\phi_t||^2 \right)dt\right)\right]\nonumber\\
&-\mathbb E_{\mathbb P}\left[\gamma Z\left(\log(\gamma Z)-1\right)
\right]\nonumber\\
&=\alpha \gamma\mathbb E_{\mathbb Q}\left[ f(S_T)-\langle \Phi_0,S_T-s_0\rangle\right]\nonumber\\
&-\alpha \gamma\mathbb E_{\mathbb Q}\left[\int_{0}^T\left(\langle \phi_t,\mathbb E_{\mathbb Q}[S_T|\mathcal F_t]-S_t \rangle -
\frac{\Lambda}{2}||\phi_t||^2 \right)dt\right]\nonumber\\
&-\gamma(\log \gamma-1)-\gamma \mathbb E_{\mathbb Q}\left[\log Z\right]\nonumber\\
&\geq \alpha \gamma\mathbb E_{\mathbb Q}\left[ f(S_T)-\langle \Phi_0,S_T-s_0\rangle-\frac{1}{2\Lambda}\int_{0}^T
\left|\left|\mathbb E_{\mathbb Q}[S_T|\mathcal F_t]-S_t\right|\right|^2 dt\right]\nonumber\\
&-\gamma(\log \gamma-1)-\gamma\mathbb E_{\mathbb Q}\left[\log Z\right]
\end{eqnarray}
where the last inequality follows from the maximization of the quadratic pattern
$$\phi_t\rightarrow \langle \phi_t,\mathbb E_{\mathbb Q}[S_T|\mathcal F_t]-S_t \rangle -
\frac{\Lambda}{2}||\phi_t||^2, \ \ \ \ t\in [0,T].$$

Optimizing (\ref{3.gam}) in $\gamma>0$ we arrive at
\begin{eqnarray*}
&\frac{1}{\alpha}\log\left(\mathbb E_{\mathbb P}\left[e^{\alpha \left(f(S_T)-V^{\Phi_0,\phi}_T\right)}\right]\right)\\
&\geq
\mathbb E_{\mathbb Q}\left[f(S_T)-\langle \Phi_0,S_T-s_0\rangle-\frac{1}{2\Lambda}\int_{0}^T\left|\left|\mathbb E_{\mathbb Q}[S_T|\mathcal F_t]-S_t\right|\right|^2 dt-\frac{1}{\alpha}\log Z\right].\nonumber
\end{eqnarray*}
Since $\phi\in\mathcal A$ and $\mathbb Q\in\mathcal Q$ were arbitrary we complete the proof.
\end{proof}

\begin{rem}
By mimicking the arguments of Proposition A.2 in \cite{BDR:21} to the multidimensional case
one can show that
the inequality in (\ref{3.2}) is in fact an equality. Namely, there is no duality gap. However,
since we need only the lower bound of the duality
then we just provide a self contained proof for (\ref{3.2}).
 \end{rem}

Next, we fix $A>0$ and prove the following key result.
\begin{proposition}\label{prop1}
Let $h:\mathbb R^d\rightarrow\mathbb R^d$
be a bounded and measurable function and let $Y=h(W_T)$.
Then,
\begin{equation*}
\lim\inf_{\Lambda\downarrow 0}c\left(\Lambda,A/\Lambda ,\Phi_0,f\right)\geq \mathbb E_{\mathbb P}\left[f\left(s_0+ W_T\sigma-Y\right)+\langle \Phi_0,Y\rangle-\frac{1}{2\sqrt A}\left\langle Y, Y\sigma^{-1}\right\rangle\right].
\end{equation*}
\end{proposition}
\begin{proof}
For a matrix valued process $\Xi=\{\Xi^{ij}_t\}_{1\leq i,j \leq d, 0\leq t\leq T}$ denote $\int_{0}^t \Xi_s\cdot dW_s$ the row vector
$(Z^1,...,Z^d)$ which is given by $Z^i=\int_{0}^t\sum_{j=1}^d \Xi^{ij}_s dW^j_s$.
By applying the martingale representation theorem and standard density arguments it follows that without loss of generality
we can assume that
$Y=y+\int_{0}^T H(s,W_s)\cdot dW_s$
where $y\in\mathbb R^d$ is a constant vector and
$H:[0,T]\times\mathbb R^d\rightarrow M_d(\mathbb R)$
is a continuous and bounded function
which satisfies
$H_{[T-\delta,T]\times\mathbb R^d}\equiv 0$
for some $\delta>0$.

The proof of Proposition \ref{prop1} is done in three steps. In the first step we fix $\Lambda>0$ and construct a sequence of probability measures
$\mathbb Q_n\sim\mathbb P$, $n\in\mathbb N$ with finite
entropy. In the second step we estimate (for the constructed probability measures) the asymptotic value of the components
in the right hand side of (\ref{3.2}) for $\alpha:=A/\Lambda$.
In the last step we summarize the computations and apply Lemma \ref{lem1}.\\

\textbf{Step I:}
Recall the drift vector $\mu=(\mu^1,...,\mu^d)$ which appears in (\ref{2.bac}).
For any $\Lambda>0$ and $0\leq s\leq t\leq T$ define
\begin{eqnarray*}
&K^{\Lambda}_{t,s}:=\cosh\left(\sqrt A (T-t)\sigma /\Lambda\right)
\left(\sinh\left(\sqrt A (T-s)\sigma/\Lambda\right)\right)^{-1}\\
&\mbox{and} \ \ a^{\Lambda}_t:=\mu \sigma^{-1}+
\frac{\sqrt A}{\Lambda} y K^{\Lambda}_{t,0}
\end{eqnarray*}
where for any square matrix $\xi$
$$\cosh(\xi):=\frac{\exp(\xi)+\exp(-\xi)}{2}, \ \ \sinh(\xi):=\frac{\exp(\xi)-\exp(-\xi)}{2}$$
and $\exp(\xi)$ is the matrix exponential of $\xi$.

Fix $\Lambda>0$. For any $n\in\mathbb N$  define the processes (which are also depend on $\Lambda$)
$W^n=(W^{n,1}_t,...,W^{n,d}_t)_{t\in [0,T]}$ and
$\theta^n=(\theta^{n,1}_t,...,\theta^{n,d}_t)_{t\in [0,T]}$
by the following recursive relations. For $t\in [0,T/n]$
$$
\theta^n_t:=a^{\Lambda}_t, \ \ W^n_t:=W_t+ \int_{0}^t \theta^n_v dv
$$
and for $k=1,...,n-1$,  $t\in (kT/n,(k+1)T/n]$
$$\theta^{n}_t:=a^{\Lambda}_t+\mathbb{I}_{||\kappa^n_t||<n}\kappa^n_t, \ \ \ W^n_t:=W_t+ \int_{0}^t \theta^n_v dv$$
where $\mathbb I$ denotes the indicator function and
$$\kappa^{n}_t:=\frac{\sqrt A}{\Lambda}
\int_{0}^{\frac{kT}{n}} K^{\Lambda}_{t,s}H(s,W^n_s)\cdot dW^n_s, \ \ \ \ t\in (kT/n,(k+1)T/n].
$$

Clearly, for any $n$ the process $\theta^n$
is a bounded process, and so
from the Girsanov theorem there exists a probability measure
$\mathbb Q^n$ such that $W^n$ is a $\mathbb Q^n$--Brownian motion.
For these probability measures we have the weak convergence (on the space of continuous function $C_d[0,T]$)
\begin{equation}\label{weakcon}
Q^n\circ \theta^n\Rightarrow \mathbb P\circ\left(a^{\Lambda}+\kappa^{\Lambda} \right)
\end{equation}
where $\kappa^{\Lambda}=(\kappa^{\Lambda}_t)_{t\in [0,T]}$ is given by
$\kappa^{\Lambda}_t:=\frac{\sqrt A}{\Lambda}
\int_{0}^{t} K^{\Lambda}_{t,s}H(s,W_s)\cdot dW_s.$
\\

\textbf{Step II:}
From the Fubini theorem, (\ref{2.bac}), (\ref{weakcon}) and
and the simple equality
$\frac{\sqrt A}{\Lambda}\int_{s}^T K^{\Lambda}_{t,s} dt =\sigma^{-1}$ $\forall s\in [0,T]$,
we get the weak convergence
\begin{equation}\label{weak1}
\mathbb Q^n\circ S_T\Rightarrow\mathbb P\circ \left(s_0+W_T\sigma -Y\right).
\end{equation}
Since $H$ is bounded
and $H_{[T-\delta,T]\times\mathbb R^d}\equiv 0$ then the term
$K_{t,s}H(s,W_s)$ is uniformly bounded and so we have the growth bound
\begin{equation}\label{boundd}
\sup_{n\in\mathbb N}\mathbb E_{\mathbb Q^n}\left[\sup_{0\leq t\leq T}||\kappa^n_t||^p\right]<\infty, \ \  \forall p>0.
\end{equation}
In particular,
$\mathbb Q^n\circ S_T$, $n\in\mathbb N$ are uniformly integrable. Thus, (\ref{weak1}) implies ($f$ has a linear growth)
\begin{equation}\label{weak2}
\lim_{n\rightarrow\infty}
\mathbb E_{\mathbb Q^n}\left[f\left(S_T\right)-\left\langle \Phi_0,S_T-s_0\right\rangle\right]=
\mathbb E_{\mathbb P}\left[f\left(s_0+W_T\sigma-Y\right)+\left\langle \Phi_0,Y\right\rangle \right].
\end{equation}

Next, let $H{'}$ be the transpose of $H$.
From the Fubini theorem,
 the It\^{o} isometry, (\ref{boundd}) and the equality $\mathbb E_{\mathbb Q^n}[\kappa^n_t]=0$ it  follows that
\begin{eqnarray}\label{2}
&\frac{\Lambda}{A}\lim_{n\rightarrow\infty}\mathbb E_{\mathbb Q^n}\left[\log\left(\frac{d\mathbb Q^n}{d\mathbb P}\right)\right]=
\frac{\Lambda}{2A}\lim_{n\rightarrow\infty}\mathbb E_{\mathbb Q^n}\left[\int_{0}^T ||\theta^n_t||^2dt\right]
\nonumber\\
&=\frac{\Lambda}{2A}\mathbb E_{\mathbb P}\left[\int_{0}^T||a^{\Lambda}_t+\kappa^{\Lambda}_t||^2 dt\right]
\leq c_1\Lambda+\frac{1}{2\Lambda}\int_{0}^T  ||y K^\Lambda_{t,0}||^2 dt\nonumber\\
&+\frac{1}{2\Lambda}\mathbb E_{\mathbb P}\left[tr\left(\int_{0}^T H'(s,W_s) \left(\int_{s}^T (K^{\Lambda}_{t,s})^2 dt\right) H(s,W_s)ds \right)\right]
\end{eqnarray}
for some constant $c_1>0$ which does not depend on $\Lambda$.

Finally, we estimate the last term in the right hand
side of (\ref{3.2}). From (\ref{2.bac})
\begin{eqnarray*}
&S_t-\mathbb E _{\mathbb Q^n}\left[S_T\left|\right.\mathcal F_t\right]\\
&=y G^{\Lambda}_t+\mathbb E _{\mathbb Q^n}\left[\int_{t}^T
\kappa^n_v \sigma dv\left|\right.\mathcal F_t\right]-\mathbb E _{\mathbb Q^n}\left[\int_{t}^T
\mathbb{I}_{||\kappa^n_v||\geq n}\kappa^n_v \sigma dv\left|\right.\mathcal F_t\right], \ \ t\in [0,T]
\end{eqnarray*}
for
$$G^{\Lambda}_t:=\frac{\sqrt A}{\Lambda}\sigma\int_{t}^T K^{\Lambda}_{v,0}dv=\sinh\left(\sqrt A (T-t)\sigma /\Lambda\right)\left(\sinh\left(\sqrt A T\sigma/\Lambda\right)\right)^{-1}.$$
By combining the
Doob inequality for the
martingales
$$\left(\mathbb E _{\mathbb Q^n}\left[\int_{0}^T
\mathbb{I}_{||\kappa^n_v||\geq n}||\kappa^n_v|| dv\left|\right.\mathcal F_t\right]\right)_{t\in [0,T]}, \ \ n\in\mathbb N$$
and (\ref{boundd})
it follows that $$\left(\mathbb E _{\mathbb Q^n}\left[\int_{t}^T
\mathbb{I}_{||\kappa^n_v||\geq n}\kappa^n_v \sigma dv\left|\right.\mathcal F_t\right]\right)_{t\in[0,T]}\rightarrow 0 \ \ \mbox{in} \ \ L^2(dt\otimes\mathbb Q^n).$$
This together with the equality $\mathbb E_{\mathbb Q^n}[\kappa^n_t]=0$ yields
\begin{eqnarray}
&\frac{1}{2\Lambda}\lim_{n\rightarrow\infty}\mathbb E_{\mathbb Q^n}\left[\int_{0}^T \left|\left|S_t-\mathbb E_{\mathbb Q^n}[S_T\left|\right.\mathcal F_t]\right|\right|^2 dt\right]\nonumber\\
&=\frac{1}{2\Lambda}\int_{0}^T \left|\left|yG^{\Lambda}_t\right|\right|^2 dt
+\frac{1}{2\Lambda}\lim_{n\rightarrow\infty}\mathbb E_{\mathbb Q^n}\left[\int_{0}^T \left|\left|\mathbb E_{\mathbb Q^n}\left[\int_{t}^T \kappa^n_v \sigma dv\left|\right.\mathcal F_t\right]\right|\right|^2 dt\right]\nonumber\\
&=\frac{1}{2\Lambda}\int_{0}^T \left|\left|yG^{\Lambda}_t\right|\right|^2 dt
+\frac{1}{2\Lambda}\mathbb E_{\mathbb P}\left[\int_{0}^T \left|\left|\mathbb E_{\mathbb P}\left[\int_{t}^T \kappa^{\Lambda}_v\sigma dv\left|\right.\mathcal F_t\right]\right|\right|^2 dt\right].
\end{eqnarray}
From the Fubini theorem
\begin{eqnarray*}
&\mathbb E_{\mathbb P}\left[\int_{t}^T \kappa^{\Lambda}_v dv\left|\mathcal F_t\right.\right]\nonumber\\
&=\frac{\sqrt A}{\Lambda}\mathbb E_{\mathbb P}\left[\int_{0}^T
\left(\int_{t\vee s}^T K^{\Lambda}_{v,s} dv\right)
H(s,W_s)\cdot dW_s
\left|\right.\mathcal F_t\right]\nonumber\\
&=\frac{\sqrt A}{\Lambda}
\int_{0}^t
\left(\int_{t}^T K^{\Lambda}_{v,s} dv\right)
H(s,W_s)\cdot dW_s.
\end{eqnarray*}
Hence, the It\^{o} isometry yields
\begin{eqnarray}\label{3}
&\frac{1}{2\Lambda}\mathbb E_{\mathbb P}\left[\int_{0}^T \left|\left|\mathbb E_{\mathbb P}\left[\int_{t}^T \kappa^{\Lambda}_v\sigma dv\left|\mathcal F_t\right.\right]\right|\right|^2 dt\right]\nonumber\\
&=\frac{1}{2\Lambda}\mathbb E_{\mathbb P}\left[tr \left(\int_{0}^T H'(s,W_s) \left(\int_{s}^T \left(L^{\Lambda}_{t,s}\right)^2dt\right) H(s,W_s)ds \right)\right]
\end{eqnarray}
where
$$L^{\Lambda}_{t,s}:=\frac{\sqrt A}{\Lambda}\sigma\int_{t}^T K^{\Lambda}_{v,s}dv=
\sinh\left(\sqrt A (T-t)\sigma /\Lambda\right)
\left(\sinh\left(\sqrt A (T-s)\sigma/\Lambda\right)\right)^{-1}.$$

\textbf{Step III:}
In this step we take $\Lambda\downarrow 0$. First, from the It\^{o} isometry
\begin{equation}\label{5}
\mathbb E_{\mathbb P}\left[\left\langle Y, Y\sigma^{-1}\right\rangle\right]= \langle y, y\sigma^{-1}\rangle+
\mathbb E_{\mathbb P}\left[tr\left(\int_{0}^T H'(s,W_s) \sigma^{-1} H(s,W_s)ds \right)\right].
\end{equation}
Next, since $\sigma$ is positive definite,
then for any $\epsilon>0$
we have the uniform convergence
\begin{eqnarray*}
&\lim_{\Lambda\downarrow 0}\left(\sup_{0\leq t\leq T-\epsilon}\left|\left|\frac{1}{2\Lambda}\int_{s}^T \left(K^{\Lambda}_{t,s}\right)^2dt-\frac{\sigma^{-1}}{4\sqrt A}\right|\right|\right)\nonumber\\
&=\lim_{\Lambda\downarrow 0}\left(\sup_{0\leq t\leq T-\epsilon}\left|\left|\frac{1}{2\Lambda}\int_{s}^T \left(L^{\Lambda}_{t,s}\right)^2dt-\frac{\sigma^{-1}}{4\sqrt A}\right|\right|\right)=0.
\end{eqnarray*}
Hence, by combining (\ref{3.2}) and (\ref{weak2})--(\ref{5}) (recall that
$H_{[T-\delta,T]\times\mathbb R^d}\equiv 0$) we obtain
\begin{eqnarray*}
&\lim\inf_{\Lambda\downarrow 0}c\left(\Lambda,A/\Lambda ,\Phi_0,f\right)\\
&\geq\lim\inf_{\Lambda\downarrow 0}\lim_{n\rightarrow\infty} \mathbb{E}_{\mathbb Q^n}\bigg[f\left(S_T\right)-\langle \Phi_0,S_T-s_0\rangle
-\frac{1}{\alpha}\log\left(\frac{d\mathbb Q^n}{d\mathbb P}\right)\\
&-\frac{1}{2\Lambda}\int_{0}^T\left|\left|S_t-
\mathbb{E}_{\mathbb Q^n}\left(S_T|\mathcal F_t\right)\right|\right|^2 dt\bigg]\\
&\geq\mathbb E_{\mathbb P}\left[f\left(s_0+W_T\sigma-Y\right)+\langle \Phi_0,Y\rangle-\frac{1}{2\sqrt A}\left\langle Y, Y\sigma^{-1}\right\rangle\right]
\end{eqnarray*}
as required.
\end{proof}
We now have all the pieces in place that we need for the
\textbf{completion of the proof of the inequality $"\geq"$ in (\ref{2.3})}.
\begin{proof}
Recall the definition of $g^A$ given
by $(\ref{def1})$.
Choose $\epsilon>0$.
From the Lipschitz continuity of $f$ it follows that there exists a finitely valued (and hence bounded and measurable)
function $\zeta:\mathbb R^d\rightarrow\mathbb R^d$ such that
\begin{equation}\label{final1}
g^A(x)<\epsilon+f\left(x-\zeta(x)\right)-\frac{\langle \zeta(x) \sigma^{-1} ,\zeta(x) \rangle }{2\sqrt A }, \ \ \ \forall x\in\mathbb R^d.
\end{equation}
By applying Proposition \ref{prop1} for the random variable
$$Y:=\zeta(s_0-\sqrt A\Phi_0+ W_T\sigma)+\sqrt A\Phi_0\sigma$$
and (\ref{final1}) for
$x:=s_0-\sqrt A\Phi_0\sigma+W_T\sigma$
we obtain
\begin{eqnarray*}
&\lim\inf_{\Lambda\downarrow 0} c(\Lambda,A/\Lambda,\Phi_0,f)\\
&\geq  \mathbb{E}_{\mathbb P}\bigg[g^A\left(s_0-\sqrt A\Phi_0\sigma+W_T\sigma\right)+
\frac{\left\langle Y\sigma^{-1}-\sqrt A\Phi_0,Y-\sqrt A\Phi_0\sigma \right\rangle}{2\sqrt A}\\
&+\langle\Phi_0,Y\rangle
-\frac{1}{2\sqrt A}\left\langle Y, Y\sigma^{-1}\right\rangle
\bigg]-\epsilon\\
&=u^A\left(0,s_0-\sqrt A\Phi_0\sigma\right)+\frac{\sqrt A\langle \Phi_0\sigma, \Phi_0\rangle }{2}-\epsilon
\end{eqnarray*}
where the equality follows from the definition of $u^A$ given by
(\ref{def2}).
By taking $\epsilon\downarrow 0$ we complete the proof.
\end{proof}

\section{Proof of the Upper Bound}\label{sec:4}
In order to complete the proof of Theorem \ref{thm.1} it remains to establish the following result.
\begin{proposition}\label{prop2}
Fix $A>0$ and recall the trading strategies $\Phi^{A,\Lambda}$, $\Lambda>0$ which are given by (\ref{ODE}).
Then,
\begin{eqnarray*}
&\lim_{\Lambda\downarrow 0}\frac{\Lambda}{A}
 \log\left(\mathbb E_{\mathbb P}\left[\exp\left(\frac{A}{\Lambda}\left(f(S_T)-V^{\Phi_0,\phi^{A,\Lambda}}_T\right)\right)\right]
\right)\\
&\leq u^A\left(0,s_0-\sqrt A \Phi_0\sigma\right)+\frac{\sqrt A\langle \Phi_0\sigma,\Phi_0\rangle }{2}.
\end{eqnarray*}
\end{proposition}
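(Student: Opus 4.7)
The plan is to obtain a pathwise upper bound on $f(S_T) - V^{\Phi_0,\phi^{A,\Lambda}}_T$ via It\^o/PDE manipulations that exploit the defining ODE \eqref{ODE} of the candidate strategy, and then to control its exponential moment using the Dol\'eans--Dade supermartingale property. I would begin by introducing the shifted stock $\tilde S_t := S_t - \sqrt A\,\Phi^{A,\Lambda}_t\sigma$ and applying It\^o's formula to $u^A(t,\tilde S_t)$. Since $\tilde S$ shares the quadratic variation $\sigma^2\,dt$ with $S$, the PDE \eqref{PDE} cancels the time and Hessian terms, so
\[
u^A(T,\tilde S_T) - u^A(0,\tilde S_0) = \int_0^T D_xu^A(t,\tilde S_t)\cdot d\tilde S_t.
\]
Substituting $D_xu^A(t,\tilde S_t) = \Phi^{A,\Lambda}_t + \tfrac{\Lambda}{\sqrt A}\phi^{A,\Lambda}_t\sigma^{-1}$ via \eqref{ODE}, expanding $d\tilde S_t = dS_t - \sqrt A\,\phi^{A,\Lambda}_t\sigma\,dt$, and integrating the identity $d\langle\Phi^{A,\Lambda}\sigma,\Phi^{A,\Lambda}\rangle = 2\langle\Phi^{A,\Lambda},\phi^{A,\Lambda}\sigma\rangle\,dt$ (which uses $\sigma^T = \sigma$) collapse this into $V^{\Phi_0,\phi^{A,\Lambda}}_T$ plus boundary quadratic forms in $\Phi$. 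Combining with the pointwise bound $u^A(T,\tilde S_T) = g^A(\tilde S_T)\ge f(S_T) - \tfrac{\sqrt A}{2}\langle\Phi^{A,\Lambda}_T\sigma,\Phi^{A,\Lambda}_T\rangle$ (obtained from the definition of $g^A$ at the choice $y = \sqrt A\,\Phi^{A,\Lambda}_T\sigma$) yields the pathwise estimate
\[
f(S_T) - V^{\Phi_0,\phi^{A,\Lambda}}_T \le u^A(0, s_0 - \sqrt A\,\Phi_0\sigma) + \tfrac{\sqrt A}{2}\langle\Phi_0\sigma,\Phi_0\rangle + R,
\]
with remainder $R := \tfrac{\Lambda}{\sqrt A}\int_0^T\langle\phi^{A,\Lambda}_t\sigma^{-1}, dS_t\rangle - \tfrac{\Lambda}{2}\int_0^T\|\phi^{A,\Lambda}_t\|^2\,dt$.

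Writing $dS_t = \mu\,dt + dW_t\sigma$ and setting $N_t := \sqrt A\int_0^t\langle\phi^{A,\Lambda}_s, dW_s\rangle$ (so that $\langle N\rangle_T = A\int_0^T\|\phi^{A,\Lambda}\|^2\,dt$), the scaled remainder decomposes as
\[
\tfrac{A}{\Lambda}R = \bigl(N_T - \tfrac{1}{2}\langle N\rangle_T\bigr) + \sqrt A\int_0^T\langle\phi^{A,\Lambda}_t\sigma^{-1}, \mu\rangle\,dt.
\]
The first bracket exponentiates to the Dol\'eans--Dade supermartingale $\mathcal E(N)_T$, which satisfies $\mathbb E_{\mathbb P}[\mathcal E(N)_T]\le 1$; thus in the drift-free case $\mu\equiv 0$ one gets $\tfrac{\Lambda}{A}\log\mathbb E_{\mathbb P}[e^{(A/\Lambda)R}]\le 0$ immediately, and the stated limit follows upon combining with the pathwise bound.

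For general $\mu$ I would perform a preliminary Girsanov change $d\mathbb Q_0/d\mathbb P = \mathcal E(-\mu\sigma^{-1}\cdot W)_T$ that renders $S$ a $\mathbb Q_0$-martingale, and apply H\"older's inequality with conjugate exponents $p = 1+\sqrt\Lambda$ and $p' = 1 + 1/\sqrt\Lambda$. The Radon--Nikodym factor then contributes $O(\sqrt\Lambda/A)$ after $\tfrac{\Lambda}{A}\log$, which vanishes. The principal obstacle is controlling the remaining Dol\'eans-type factor $\mathbb E_{\mathbb Q_0}[\mathcal E(pN)_T \, e^{p(p-1)\langle N\rangle_T/2}]$, for which one needs an estimate of the form $\mathbb E[e^{c\langle N\rangle_T}] = e^{O(c/\Lambda)}$ for small $c>0$. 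This rests on the fast mean-reverting structure of \eqref{ODE} (with rate $\sqrt A/\Lambda$) keeping $\int_0^T\|\phi^{A,\Lambda}\|^2\,dt$ concentrated near $O(1/\Lambda)$: the equilibrium variance of $D_xu^A(t,\tilde S_t) - \Phi^{A,\Lambda}_t$ is of order $\Lambda/\sqrt A$, which compensates the $A/\Lambda^2$ factor in $\|\phi^{A,\Lambda}\|^2$. With $c = p(p-1)/2 = O(\sqrt\Lambda)$, this produces a contribution of order $O(\sqrt\Lambda/A)$ to $\tfrac{\Lambda}{pA}\log$, which vanishes and closes the argument.
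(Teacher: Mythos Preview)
Your pathwise estimate via It\^o's formula, the PDE \eqref{PDE}, the ODE \eqref{ODE} and the choice $y=\sqrt A\,\Phi^{A,\Lambda}_T\sigma$ in the definition of $g^A$ is correct and is exactly what the paper does (packaged there as the process $M^{\Lambda}_t$). In particular your identification of the remainder $R$ and its decomposition
\[
\tfrac{A}{\Lambda}R=\bigl(N_T-\tfrac12\langle N\rangle_T\bigr)+\sqrt A\int_0^T\langle\phi^{A,\Lambda}_t\sigma^{-1},\mu\rangle\,dt
\]
matches the paper's computation $\frac{dM^{\Lambda}}{M^{\Lambda}}=\sqrt A\langle\phi^{A,\Lambda}_t\sigma^{-1},dS_t\rangle$.

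The discrepancy is in how you handle the drift term. You propose a Girsanov change plus H\"older with exponents $p=1+\sqrt\Lambda$, $p'=1+1/\sqrt\Lambda$, which then requires an exponential moment bound $\mathbb E[e^{c\langle N\rangle_T}]=e^{O(c/\Lambda)}$ for $c=O(\sqrt\Lambda)$. You only sketch this via an ``equilibrium variance $O(\Lambda/\sqrt A)$'' heuristic; a priori one merely has the pathwise bound $\langle N\rangle_T=O(1/\Lambda^2)$ from the uniform boundedness of $D_xu^A$ and $\Phi^{A,\Lambda}$, and upgrading the heuristic to a rigorous exponential moment estimate is a genuine gap.

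The paper avoids this entirely by a one-line observation you overlook: since $\phi^{A,\Lambda}=\dot\Phi^{A,\Lambda}$, the drift integral collapses to a boundary term,
\[
\sqrt A\int_0^T\langle\phi^{A,\Lambda}_t\sigma^{-1},\mu\rangle\,dt=\sqrt A\,\langle\Phi^{A,\Lambda}_T-\Phi_0,\mu\sigma^{-1}\rangle,
\]
and the mean-reverting structure of \eqref{ODE} together with the uniform boundedness of $D_xu^A$ gives $\sup_{t,\Lambda}\|\Phi^{A,\Lambda}_t\|\le C$. Hence the drift contributes a deterministic factor bounded by $e^{2\sqrt A\,C\|\mu\sigma^{-1}\|}$, the process $\exp(-\sqrt A\langle\Phi^{A,\Lambda}_t-\Phi_0,\mu\sigma^{-1}\rangle)M^{\Lambda}_t$ is a nonnegative local martingale (hence supermartingale), and after taking $\tfrac{\Lambda}{A}\log$ the drift contributes $O(\Lambda)$. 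No Girsanov, no H\"older, no exponential moment estimate for $\langle N\rangle_T$ is needed.
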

\begin{proof}
Introduce the $d$--dimensional process
$$\Theta^{A}_t:=D_x u^A\left(t,S_t-\sqrt A\Phi_t\sigma\right), \ \ t\in [0,T].$$
From the ODE (\ref{ODE}) it follows that for any $\Lambda>0$
$$\Phi^{A,\Lambda}_t=\Phi_0\cdot \exp\left(- \sqrt  A t\sigma/\Lambda\right)+
\frac{\sqrt A}{\Lambda}\int_{0}^t \Theta^{\Lambda}_v\sigma\exp\left(\sqrt A (v-t)\sigma /\Lambda\right) dv
, \ \  t\in[0,T].$$
Since $\sigma$ is positive definite and $D_x u^A$ is uniformly bonded ($f$ is Lipschitz), then there exists a constant $C>0$ such that
\begin{equation}\label{bound}
\sup_{0\leq t\leq T}||\Phi^{A,\Lambda}_t||\leq C, \ \ \forall \Lambda>0.
\end{equation}

Next,
for any $\Lambda>0$ define the process
$M^{\Lambda}=(M^{\Lambda}_t)_{t\in [0,T]}$ by
$$M^{\Lambda}_t:=\exp\left(\frac{A}{\Lambda}\left(u^A\left(t,S_t-\sqrt A \Phi^{A,\Lambda}_t\sigma\right)
+\frac{\sqrt A\langle \Phi^{A,\Lambda}_t\sigma,\Phi^{A,\Lambda}_t\rangle}{2}-
V^{\Phi_0,\phi^{A,\Lambda}}_t\right)\right).$$
From the It\^{o} formula, (\ref{eq:pnl}) and (\ref{PDE})
\begin{eqnarray*}
&\frac{dM^{\Lambda}_t}{M^{\Lambda}_t}=\frac{A}{\Lambda}\left\langle D_x u^A \left(t,S_t-\sqrt A \Phi^{A,\Lambda}_t\sigma\right)-
\Phi^{A,\Lambda}_t,dS_t\right\rangle\\
&+\frac{A^2}{2\Lambda^2}\left|\left|\left(D_x u^A
\left(t,S_t-\sqrt A \Phi^{A,\Lambda}_t\sigma\right)-\Phi^{A,\Lambda}_t\right)\sigma\right|\right|^2 dt\\
&-\frac{A^{3/2}}{\Lambda}\left\langle \phi^{A,\Lambda}_t,\left(D_x u^A
\left(t,S_t-\sqrt A\Phi^{A,\Lambda}_t\sigma\right)-\Phi^{A,\Lambda}_t\right)\sigma-\frac{\Lambda}{2\sqrt A}\phi^{A,\Lambda}_t\right\rangle
dt\\
&=\sqrt A\left\langle\phi^{A,\Lambda}_t\sigma^{-1} ,dS_t\right\rangle
\end{eqnarray*}
where the last equality follows from (\ref{ODE}).
Hence, from (\ref{2.bac})
$$\exp\left(-\sqrt A\left\langle \Phi^{A,\Lambda}_t-\Phi^{A,\Lambda}_0,\mu\sigma^{-1} \right \rangle
\right)M^{\Lambda}_t, \ \ t\in [0,T]$$
is a local--martingale, and so from the obvious inequality $M^{\Lambda}> 0$
we conclude that this process is a super--martingale.

Finally, (\ref{def1}) yields that
$
f(x)\leq g^A\left(x-y\sigma \right)+\frac{\langle y\sigma ,y \rangle }{2\sqrt A }$ for all
$x,y\in\mathbb R^d$.
This together with the growth bound
(\ref{bound}) and the above super--martingale property gives
(observe that $u^A(T,\cdot)=g^A(\cdot)$) that for any $\Lambda>0$
\begin{eqnarray*}
&\frac{\Lambda}{A}\log\left(\mathbb E_{\mathbb P}\left[\exp\left(\frac{A}{\Lambda}\left(f(S_T)-V^{\Phi_0,\phi^{\Lambda}}_T\right)\right)\right]\right)\\
&\leq \frac{\Lambda}{A}\log\left(\mathbb E_{\mathbb P}[M^{\Lambda}_T]\right)\leq
\frac{\Lambda}{A}\log \left(M^{\Lambda}_0\right)+\frac{\Lambda}{\sqrt A}2 C T||\mu\sigma^{-1}||\\
&=u^A\left(0,s_0-\sqrt A\Phi_0\sigma\right)+\frac{\sqrt A\langle \Phi_0\sigma,\Phi_0\rangle }{2}+\frac{\Lambda}{\sqrt A}2 C T||\mu\sigma^{-1}||
\end{eqnarray*}
and the result follows by taking $\Lambda\downarrow 0$.
\end{proof}

\section*{Acknowledgements}
The authors thank the AE and the anonymous reviewers for their valuable
reports and comments which helped to improve the quality of this paper.


\begin{thebibliography}{99}
\bibliographystyle{APT}
\footnotesize


\bibitem{AlmgrenChriss:01}
R. Almgren and N. Chriss,
{\em Optimal execution of portfolio transactions,}
Journal of Risk, {\bf 3,} 5--39, (2001).

\bibitem{AFS:2010}
A. Alfonsi, A. Fruth  and A. Schied,
{\em Optimal execution strategies in limit order books with general shape functions,}
Quantitative Finance, {\bf 2,} 143--157, (2010).

\bibitem{R:17}
M. Anthropelos, S. Robertson and K. Spliopoulus,
{\em The pricing of contingent claims and optimal positions in asymptotically complete markets,}
Annals of Applied Probability, {\bf 27,} 1178--1830, (2017).


\bibitem{B:86}
F. Black, {\em Noise,} Journal of Finance, {\bf 41,} 529--543, (1986).


\bibitem{BCE:2021}
E. Bayraktar, T. Caye, and I. Ekren,
{\em  Asymptotics for Small Nonlinear Price Impact: a PDE Approach to the Multidimensional Case,}
Mathematical Finance, {\bf 31,} 36--108, (2021).


\bibitem{BDR:21}
P. Bank, Y. Dolinsky and M. R\'{a}sonyi,
{\em What if we knew what the future brings?}
https://arxiv.org/abs/2108.04291, (2021).


\bibitem{BLZ:2016}
B. Bouchard, G. Loeper and Y.Zou,
{\em Almost-sure hedging with permanent price impact,}
Finance and Stochastics, {\bf 20,} 741--771, (2016).

\bibitem{BS:98}
G. Barles and H.M. Soner,
{\em Option pricing with transaction costs
and a nonlinear Black--Scholes equation,}
Finance and Stochastics, {\bf 2,} 369--397, (1998).


\bibitem{BT:2019}
B. Bouchard and X. Tan,
{\em Understanding the dual formulation for the hedging of
path-dependent options with price impact},
to appear in Annals of Applied Probability.
https://arxiv.org/pdf/1912.03946, (2020).

\bibitem{BV:2019}
P. Bank and M. Voß,
{\em Optimal Investment with Transient Price Impact,}
SIAM Journal on Financial Mathematics, {\bf 10,} 723--768, (2019).

\bibitem{R:08}
R. Carmona,
{\em Indifference pricing: theory and applications,}
Princeton University Press,
series in Financial Engineering, (2009).

\bibitem{CHM:20}
T. Caye, M. Herdegen and J. Muhle-Karbe,
{\em Trading with Small Nonlinear Price Impact,}
Annals of Applied Probability,
{\bf 30,} 706--746, (2020).

\bibitem{FSU:2019}
A. Fruth, T. Schöneborn and M. Urusov,
{\em Optimal trade execution in order books with stochastic liquidity,}
Mathematical Finance, {\bf 29,} 507--541, (2019).

\bibitem{GR:15}
P. Guasoni and M. R\'asonyi,
{\em Hedging, arbitrage and optimality under superlinear friction,}
Annals of Applied Probability, {\bf 25,} 2066--2095, (2015).

\bibitem{GW:2020}
P. Guasoni and M. Weber,
{\em Nonlinear price impact and portfolio choice,}
Mathematical Finance, {\bf 30,} 341--376, (2020).

\bibitem{MMS:17}
L. Moreau, J. Muhle-Karbe and H.M. Soner,
{\em Trading with Small Price Impact,}
Mathematical Finance, {\bf 27,} 350-400, (2017).

\bibitem{N:20}
S. Nadtochiy,
{\em A simple microstructural explanation of the concavity of price impact,}
to appear in Mathematical Finance. https://arxiv.org/pdf/2001.01860, (2020).

\bibitem{SZ:2018}
A. Schied and T. Zhang,
{\em A Market Impact Game Under Transient Price Impact,}
Mathematics of Operations Research, {\bf 44,} 102--121, (2019).


\bibitem{W:98}
W. Walter,
Ordinary Differential Equations,
Springer-Verlag New-York, (1998).


\end{thebibliography}
\end{document}